\setlist[description]{topsep=1pt, parsep=1pt, itemsep=1pt, labelindent=.5cm,leftmargin=*}
\title{Consensus Through Knot Discovery in Asynchronous Dynamic Networks} 
\author{Rachel Bricker \and Mikhail Nesterenko \and Gokarna Sharma}
\institute{Kent State University, Kent, OH, 44242, USA\\
\email{rbricke2@kent.edu}, \email{mikhail@cs.kent.edu}, and \email{gsharma2@kent.edu}}
\begin{document}

\maketitle
\begin{abstract}
We state the Problem of Knot Identification as a way to achieve consensus in dynamic networks. The network adversary is asynchronous and not oblivious. 
The network may be disconnected throughout the computation. 
We determine the necessary and sufficient conditions for the existence of a solution to the Knot Identification Problem: the knots must be observable by all processes and the first observed knot must be the same for all processes. We present an algorithm \emph{KIA} that solves it. We conduct \emph{KIA} performance evaluation. 
\end{abstract}
\section{Introduction}

In a dynamic network, the topology changes arbitrarily from one state of the computation to the next. Thus, it is one of the most general models for mobile networks. Moreover, these intermittent changes in topology may represent message losses. Hence, dynamic networks are a good model for an environment with low connectivity or high fault rates.

In such a hostile setting, the fundamental question of consensus among network processes is of interest. 
One approach to consensus is to require that processes in the network remain connected and mutually reachable long enough for them to exchange information and come to an agreement.
However, this may be too restrictive. This is especially problematic if the network is asynchronous and there is no bound on the communication delay between processes. 

The question arises whether it is possible to achieve consensus under less stringent connectivity requirements. In the extreme case, the network is never connected at all. Then, the processes may not rely on mutual communication for agreement.

An interesting approach to consensus is for the processes to use the topological features of the dynamic network itself as a basis for agreement. For example, the process with the smallest identifier or the oldest edge. 
However, as processes collect information about the network topology, due to network delays, such features may not be stable. Indeed, some process may discover another process with a smaller identifier. Therefore, basing consensus decisions on such unstable information may not be possible.

A knot is a strongly connected component with no incoming edges. 
We consider a knot to include edges across some time interval. That is, knot processes may never be connected in a single state.
In general, 
the presence of a knot in a dynamic network is not invariant. 
For example, a dynamic network may have more than one knot, or a knot may grow throughout a network computation. In this paper, we study the conditions under which knots may be used for consensus in asynchronous dynamic networks.

\ \\
\textbf{Related work.}
The impossibility of consensus in asynchronous systems~\cite{flp} in case of a single faulty process has precipitated extensive research on the subject of consensus. Santoro and Widmayer~\cite{santoro1989time} show that consensus is impossible even in a synchronous system subject to link failures. The original paper~\cite{flp} uses knot determination as a topological feature of a computation to achieve consensus. 

There are a number of models related to dynamic networks where consensus is studied. Charron-Bost and Schiper~\cite{charron2009heard} introduce a heard-of (HO) model and consider consensus solvability there. In this model, the set of ``heard-of'' for each process is analyzed. The consensus is proved to be solvable only if there is an agreement between processes on these heard-of sets. 

There is a related research direction of consensus with unknown participants~\cite{cavin2004consensus}, where participant detectors perform a similar role to links in dynamic networks.  
Rather than place restrictions on the dynamic network to enable a solution, Altisen et al.~\cite{altisen2021implementing} relax the problem and consider an eventually stabilizing version of it. 

Kuhn {\it et al.}~\cite{kuhn2010distributed,kuhn2011coordinated} study consensus under a related model of directed networks. In their case, it is assumed that there exists a spanning subgraph of the network within $T$ rounds. In the case $T=1$, the network is always connected.

Afek and Gafni~\cite{afek2013asynchrony} introduce the concept of an adversary as a collection of allowed network topologies. An oblivious adversary~\cite{coulouma2015characterization} composes an allowed computation by selecting the topology of each state from a fixed set of allowed network topologies. There are a number of papers that study consensus under this adversary~\cite{castaneda2021topological,fevat2011minimal,winkler2023time}.

In the case of a non-oblivious adversary, no restrictions on the potential state topologies are placed. 
In this case, the adversary completely controls the connectivity of the network in any state and the changes in connectivity from state to state. 
In the work known to us, to solve consensus
under such a powerful adversary, extra connectivity assumptions are assumed. Biely {\it et al.}~\cite{biely2012agreement} consider consensus under an eventually stabilizing connected root component. Some papers study the case where the network stays connected long enough to achieve consensus~\cite{biely2018gracefully,schwarz2016fast}.

In the present work, we assume a non-oblivious adversary and consider the case where the system may remain disconnected in any state of the computation.

\ \\
\noindent\textbf{Our contribution.} 
We use the non-oblivious adversary defined and studied previously~\cite{biely2012agreement,biely2018gracefully,schwarz2016fast}. We focus on knot identification under such adversary. 
We define the Knot Identification Problem and study it for directed dynamic networks. 
This problem requires the network processes to agree on a single knot.
The solution to this problem can immediately be used to solve consensus.

We assume an asynchronous adversary. The asynchrony of the adversary allows it to delay the communication between any pair of processes for arbitrarily long. If the adversary is asynchronous, each process may not hope to gain additional information by waiting and must make the output decision on the basis of what it has observed so far. 

We consider a \emph{knot observation final} adversary. In such adversary, it is possible that the collection of knots observed by some process may not increase throughout the rest of the computation. Thus, the processes may not ignore any of the observed knots hoping to get others later. Instead, each process must make the decision on the basis of the knots seen so far.

Since the processes must agree on a knot, a process may output the knot only if it is observed by other processes. Hence, the same knot needs to be observed by all processes. Once a process observes a knot, it must determine if this knot is observed by everyone else. We call an adversary \emph{knot opaque} if it does not allow a process to determine whether the knot is observed by other processes or not. We prove that there is no solution to the Knot Identification Problem for such a knot opaque adversary. We then consider adversaries that are \emph{knot transparent} rather than knot opaque. 

For an adversary that is asynchronous, knot transparent, and knot observation final, we prove that it is necessary and sufficient for all processes to observe the same first knot. 

For sufficiency, we present a simple knot identification algorithm \emph{KIA} that solves the Knot Identification Problem. 
We conduct performance evaluation to study \emph{KIA} behavior. This evaluation studies the dynamics of knot detection under various parameters. It demonstrates the practicality of \emph{KIA} and our
approach to consensus for dynamic networks with little connectivity.

\section{Notation and Problem Definitions}

We state the notation to be used throughout the paper in this section. To simplify the exposition, we add further definitions in later sections, closer to place of their usage. 
 
\ \\
\textbf{Links, states, computations.} The network consists of $N$ \emph{processes}. The processes have unique identifiers. No process a priori knows $N$ or the identifiers of the other processes. 

A pair of processes may be connected by a unidirectional \emph{link}. The network \emph{state} $s$ is a collection of such links that, together with the processes, form a \emph{state communication graph} or just \emph{state graph}. 
Thus, processes are nodes in this graph. 
One specific state is a \emph{non-communicating state} whose state graph has no links. That is, all processes are disconnected in this state.

A \emph{computation} $\sigma$ is an infinite sequence of network states. An \emph{adversary} is a set of allowed computations. 
Given an adversary, an algorithm attempts to solve a particular problem. We use the term computation for both the states allowed by the adversary and the operation of the algorithm in these states.

To aid in the solution, processes exchange information across existing links. If two processes are connected by a link in  a particular state, the sender may transmit an unlimited amount of information to the receiver. This communication is reliable.
The sender does not learn the receiver's identifier.

The processes do not fail.  Alternatively, a process failure may be considered as permanent disconnection of the failed process from the rest of the network. 

\ \\
\noindent
\textbf{Causality, asynchrony, computation graphs, and knots.}
A computation \emph{event} is any computation action or topological occurrence that happens in a computation. Examples of computation events are a process carrying out its local calculations or an appearance of a link. 
Given a particular computation, a communication event $e_1$  \emph{causally precedes} another event $e_2$  if (i) both events occur in the same process and $e_1$ occurs before $e_2$; (ii) there is a communication link between processes $p_1$ and $p_2$ and $e_1$ occurs at $p_1$ before the link and $e_2$ occurs at $p_2$ after the link; 
(iii) there is another event $e_3$ such that $e_1$ causally precedes $e_3$ and $e_3$ causally precedes $e_2$. 
We consider the presence of a link in a particular state to be a single event. If the same link is present in the subsequent state, it is considered a separate event. This way, causal precedence is defined for links. 
Note that the insertion of a non-communicating state into a computation preserves all causality relations of the computation. 

Consider a computation $\sigma_1$ allowed by some adversary $\mathcal{A}$. Let $\sigma_2$ be obtained from $\sigma_1$ by inserting a non-communicating state after an arbitrary state of $\sigma_1$. If the adversary $\mathcal{A}$ also allows $\sigma_2$, then $\mathcal{A}$ is  \emph{asynchronous}. Intuitively, an asynchronous adversary may delay process communication for arbitrarily long.

Given a computation $\sigma$, a \emph{computation graph} $G(\sigma,i)$ is the union of all the state graphs up to and including state $s_i$. To put another way, the computation graph is formed by the processes
and the links present in any state $s_j$, for $j \leq i$. 

A \emph{knot} is a strongly connected subgraph with no incoming links. 
A process $p_i$ is in a knot if for every process $p_j$ reachable from $p_i$, $p_i$ is reachable from $p_j$. Given a graph $G$, this definition suggests a simple knot computation algorithm. For each process in $G$, compute a reachability set $S$. For a process $p_i$ with reachability set $S_i$, if there is a process $p_j \in S_i$ such that $p_i \in S_j$ then, $p_i$ and $p_j$ are in the same knot.

When it is clear from the context, we use the term \emph{knot} for both the subgraph and for the set of processes that form this subgraph. Any process that has not communicated yet is trivially a singleton knot. Therefore, we only consider knots of size at least two.  

Computation $\sigma$ contains a knot $K$ if there is a state $s_i$, $i<\infty$, such that $G(\sigma, i)$ contains  $K$.
Note that there is no requirement that the edges of the knot in a computation are causally related, just that the union of all state graphs up to some state $s_i$ contains a knot. 
As the computation progresses, edges are added to the computation graph of this computation. In general, a knot in this graph is not stable. If an incoming edge is added, the knot may disappear. Similarly, added links may expand the knot by joining mutually reachable processes. 

\ \\
\textbf{Observability.}
A \emph{local observation graph} $LG(p, \sigma, i)$ is all the links and adjacent processes that causally precede the events in $p$ in state $s_i$ of computation $\sigma$. A local observation graph $LG(p, \sigma, i)$ is thus a subgraph of the computation graph $G(\sigma, i)$.
In effect, the local observation graph of $p$ is what $p$ sees of the computation so far. 
In the beginning of the computation $LG$ of process $p$ is empty and $LG$ grows as $p$ receives topological information from incoming links.

Two computations $\sigma_1$ and $\sigma_2$ are \emph{observation graph identical} for process $p$ up to state $s_i$ if $LG(p,\sigma_1,i) = LG(p,\sigma_2,i)$.

\begin{figure}[htbp]
   \centering 
   \includegraphics[width=0.5\textwidth]{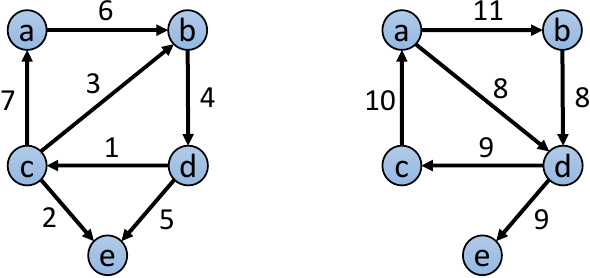}
   \caption{Knot formation example. Edge labels denote states when the edges are present. Process $e$ observes knot $K_1 = \{b,c,d\}$; process $d$ is the first to observe knot $K_2 = \{a,b,c,d\}$. }
   \label{knotPic}
\end{figure}

Let us illustrate these concepts with an example shown in Figure~\ref{knotPic}. In state $4$, a knot $K_1 = \{b,c,d\}$ is formed due to the links $d\rightarrow c$, $c \rightarrow b$ and $b \rightarrow d$. In state $5$, due to the link $d \rightarrow e$, process $e$ observes $K_1$.  
In state $6$, $K_1$ is destroyed because of an incoming link: $a \rightarrow b$. In state $7$, link $c \rightarrow a$ creates knot $K_2 = \{a, b, c, d\}$. In state $8$, process $d$ observes $K_2$. In the remaining states, all processes observe $K_2$.

In general, a knot may exist in the computation graph but may not be visible to any of the processes that belong to this knot or even any of the processes in the network at all. Indeed, 
processes that belong to a knot may not see said knot because it does not belong to their local observation graphs. For example, in Figure~\ref{knotPic}, if no more links appear after state 7
in the computation,
none of the processes in knot $K_2$, or even in the entire network, observe $K_2$, yet it exists in the computation graph.

A knot $K$ is \emph{observable} in  computation $\sigma$ by process $p$ if there is a state $s_i$ such that $K \subset LG(p,\sigma,i)$.  A knot is \emph{globally observable} in a computation if it is observable by every process in the network. 
That is, a knot is globally  observable if every process eventually sees it.

Consider the earliest state in the computation $\sigma$ where $p$ observes knot $K$.
This state contains an incoming link (or links) to $p$ that brings additional topological information to $LG(p, \sigma, i)$ to complete the knot $K$.  This link is the \emph{observation event} at process $p$ for this knot. For example, in Figure~\ref{knotPic}, link $d \rightarrow e$ is the observation event for $K_1$ at process $e$.

\ \\
We consider algorithms that are deterministic in the following way. 
If two computations $\sigma_1$ and $\sigma_2$ are observation graph identical for process $p$ up to state $s_i$, then all the outputs of $p$ up to state $s_i$ for algorithm $\mathcal{S}$ in the two computations are identical.
Put another way, in such an algorithm, each process makes its decisions only on the basis of its local observation graph. 

\ \\
\textbf{The Knot Identification and Consensus Problems.} 

\begin{definition}[Consensus]
Given that every process is input a binary value $v$, a consensus algorithm requires each process to output a decision value following the three properties.

\begin{description}
\item[]{\em Consensus Validity:} if all processes are input the same value $v$, then output decision is $v$;
\item[]{\em Consensus Agreement:} if one process outputs $v$, then every output decision is $v$; 
\item[]{\em Consensus Termination:} every process decides.
\end{description}
\end{definition}

\begin{definition}[Knot Identification]
A solution to the Knot Identification Problem requires that
given a computation, 
each process outputs the set of processes $K$ that form a knot in this computation. The output is subject to the following properties.  
\begin{description}
\item [] \emph{KI Agreement:} if one process outputs a knot $K$, then every output knot is also $K$;
\item [] \emph{KI Termination:} every process outputs a knot.
\end{description}
\end{definition}

An adversary is \emph{consensual} if there exists an algorithm that solves Consensus on every computation allowed by this adversary. Similarly, a \emph{knot-identification} adversary admits an algorithm that solves this problem on every allowed computation. 

Once the Knot Identification Problem is solved, consensus follows. Indeed, if all processes agree on a knot, they may use it to determine the consensus value to be output. For example, the consensus value may be the input to the knot process with the highest identifier, or the process incident to the oldest link, etc. We state this observation in the below proposition. 

\begin{proposition}
A knot-identification adversary is also a consensual adversary.
\end{proposition}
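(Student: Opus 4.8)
The plan is to prove the proposition by a reduction: I will show that any algorithm solving Knot Identification for an adversary $\mathcal{A}$ can be converted into an algorithm solving Consensus for the same $\mathcal{A}$. The idea, already hinted at in the preceding discussion, is to have the processes first agree on a single knot $K$ and then extract the consensus decision from a fixed deterministic function of the inputs of the processes that form $K$.

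First I would augment the given knot-identification algorithm so that each process piggybacks its binary input $v$ onto the topological information it already transmits along its outgoing links. Since the communication model permits a sender to forward an unlimited amount of information on any link, this augmentation is free and does not alter the knot-identification behavior: the sequence of local observation graphs, and hence by the determinism assumption the knot each process outputs, is unchanged. By KI Termination every process eventually outputs a knot, and by KI Agreement this knot is the same set $K$ at every process.

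Next I would specify how the consensus value is chosen. When a process $p$ outputs $K$, it does so because $K \subset LG(p,\sigma,i)$ for some state $s_i$; that is, every process of $K$, together with its links, causally precedes $p$ at state $s_i$. Consequently the piggybacked input of each process in $K$ has propagated to $p$ along those same paths and is recorded in its local observation graph. I would then fix a deterministic selection rule on $K$, for instance, decide the input value carried by the process in $K$ with the largest identifier, and let each process output accordingly.

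It then remains to verify the three consensus properties. Consensus Termination is immediate from KI Termination. Consensus Agreement follows because all processes agree on the same $K$ (KI Agreement), each of them knows the inputs of every process in $K$, and all apply the identical deterministic selection rule, so all decide the same value. Consensus Validity holds because if every process starts with the same input $v$, then in particular every process in $K$ holds $v$, and the selection rule can only return $v$. The one point requiring care, and the main obstacle, is ensuring that the deciding process actually possesses the inputs the selection rule consults; this is exactly what $K \subset LG(p,\sigma,i)$ guarantees, provided the rule depends only on data internal to the chosen knot rather than on global network information.
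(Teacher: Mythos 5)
Your proposal is correct and follows essentially the same route as the paper, which justifies the proposition by exactly this observation: once all processes agree on a knot $K$, they apply a fixed deterministic rule such as taking the input of the knot process with the highest identifier. Your write-up merely makes explicit the details the paper leaves implicit (piggybacking inputs on the unlimited-bandwidth links, and the fact that $K \subset LG(p,\sigma,i)$ guarantees the deciding process actually holds the inputs of all members of $K$), which is a faithful elaboration rather than a different argument.
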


\noindent In the remainder of the paper, we focus on the Knot Identification Problem. 

\section{Necessary and Sufficient Conditions for\\ Knot Identification}

\textbf{Knot opacity.} The KI Agreement property requires that every process outputs the same knot. A process may output only a knot that it observes. Hence, the following proposition. 

\begin{proposition}
In a solution to the Knot Identification Problem, every process outputs only a globally observable knot. 
\end{proposition}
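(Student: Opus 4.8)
The plan is to derive the statement from KI Agreement together with the deterministic nature of the algorithm. I would fix an arbitrary solution on a knot-identification adversary and an allowed computation $\sigma$ in which some process outputs a knot $K$. By KI Agreement every output equals $K$, and by KI Termination every process eventually outputs, so every process outputs $K$. It therefore suffices to show that $K$ is observed by every process, which is exactly the definition of global observability. I would argue by contradiction: suppose some process $q$ never observes $K$, so that at the state $s_i$ in which $q$ outputs $K$ we have $K \not\subseteq LG(q,\sigma,i)$.

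The core step is to turn the gap $K \not\subseteq LG(q,\sigma,i)$ into a violation of correctness. Since part of $K$ lies outside $q$'s local observation graph, at least one link needed to complete $K$ does not causally precede $q$'s events in $s_i$. I would construct a companion computation $\sigma'$ that is observation graph identical to $\sigma$ for $q$ up to $s_i$, so that $LG(q,\sigma',i) = LG(q,\sigma,i)$, but in which $K$ is \emph{not} a knot of the computation graph: the adversary withholds the unobserved link of $K$, or routes an incoming link into $K$, acting only on links that do not causally reach $q$ before $s_i$, so that $q$'s view is untouched. By the determinism assumption, $q$ produces the same output $K$ at $s_i$ in $\sigma'$ as in $\sigma$; but in $\sigma'$ the set $K$ does not form a knot, contradicting the requirement that each process output a set of processes forming a knot in the computation. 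This contradiction shows that every process observes $K$, and hence $K$ is globally observable.

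The step I expect to be the main obstacle is the construction of $\sigma'$: I must guarantee that $\sigma'$ is itself allowed by the adversary, that its modifications genuinely destroy the knot $K$, and that every modified link lies strictly outside the causal past of $q$'s events in $s_i$, so that observation graph identity, and hence the deterministic reproduction of $q$'s output, is preserved. Pinning down that such a $\sigma'$ always exists, by appealing to the causality and observation-graph definitions together with the closure properties of the adversary (in particular asynchrony, which lets non-communicating states be inserted while preserving causality), is where the real work lies; the reduction through KI Agreement and KI Termination is routine.
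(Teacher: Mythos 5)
Your opening reduction---KI Agreement plus KI Termination imply that every process outputs the same knot $K$, so it suffices to show that every process observes $K$---is exactly the paper's route. But the paper stops there: it invokes, as a modeling premise, the statement that a process may output only a knot that it observes, and the proposition follows immediately. You instead try to \emph{derive} that premise from determinism via an indistinguishability argument, and that is where your proof has a genuine gap.

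The gap is the existence of the companion computation $\sigma'$ inside the adversary. An adversary in this paper is just an arbitrary set of computations; it is not assumed to be closed under withholding a link of $K$, or adding an incoming link to $K$, even when these modifications lie outside the causal past of $q$. The only closure property appearing anywhere in the paper is asynchrony, which (a) is not assumed at the point this proposition is stated---the proposition concerns an arbitrary knot-identification adversary---and (b) would not help anyway, since inserting a non-communicating state preserves every link and every causal relation, so it can never turn $K$ into a non-knot of the computation graph. Without $\sigma' \in \mathcal{A}$, the determinism assumption imposes no constraint on $q$'s behavior, and your contradiction never materializes. In fact, the premise cannot be proved from determinism alone: take an adversary consisting of the single computation in Figure~\ref{knotPic} with no links after state $7$, in which knot $K_2$ exists but is never observed by any process; the algorithm in which every process immediately outputs the hard-coded $K_2$ satisfies KI Agreement, KI Termination, and determinism (vacuously, since there is only one computation), yet $K_2$ is not globally observable. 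So the proposition genuinely rests on reading ``output a knot'' as ``output a knot contained in one's own local observation graph''---a premise to be adopted as part of the model, as the paper does, rather than established by an adversary construction.
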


However, even if an adversary has a globally observable knot in every computation, it does not guarantee that this adversary
admits a solution to the Knot Identification Problem. A process observing a particular knot must also know whether or not this specific knot is globally observable.  Let us discuss this in detail.

An adversary $\mathcal{A}$ is \emph{knot opaque} if there is a process $p$ and a computation $\sigma_1 \in \mathcal{A}$ such that for every state $s_i$ of  $\sigma_1$ and every knot $K$ observed by $p$ in states up to $s_i$, there is another computation $\sigma_2$ that is local observation graph identical
to $\sigma_1$ 
for $p$ up to $s_i$, yet $K$ is not globally observable in $\sigma_2$.
Intuitively, a knot opaque adversary does not allow a process $p$ to distinguish whether or not any knot $K$ that $p$ observes is also observed by all other processes, i.e. this knot is globally observable. An adversary is \emph{knot transparent} if it is not knot opaque.

\begin{lemma}\label{opaque}
There does not exist a solution to the Knot Identification Problem for a knot opaque adversary. 
\end{lemma}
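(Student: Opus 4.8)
The plan is a standard indistinguishability argument that exploits the already-established proposition that any solution outputs only globally observable knots, together with the deterministic dependence of each process's output on its local observation graph. I would assume, for contradiction, that some algorithm $\mathcal{S}$ solves the Knot Identification Problem on every computation allowed by a knot opaque adversary $\mathcal{A}$. Let $p$ and $\sigma_1 \in \mathcal{A}$ be the process and computation that witness knot opacity.

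First I would pin down what $p$ does on $\sigma_1$. By KI Termination, $p$ outputs some knot $K$ in $\sigma_1$, and by the preceding proposition this $K$ is globally observable in $\sigma_1$, so in particular $p$ itself observes $K$ at some state. I would then fix a state $s_i$ large enough that, in $\sigma_1$, process $p$ has both observed $K$ and already output $K$. This is legitimate because the local observation graph only grows and a decision, once made, persists, so taking the later of the two events furnishes the desired $s_i$. At this $s_i$, the knot $K$ is observed by $p$ in states up to $s_i$, which is exactly the hypothesis needed to invoke opacity.

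Next I would apply the definition of knot opacity at $s_i$: it supplies a computation $\sigma_2 \in \mathcal{A}$ that is observation graph identical to $\sigma_1$ for $p$ up to $s_i$, yet in which $K$ is not globally observable. Determinism of $\mathcal{S}$ forces $p$ to produce the same outputs in $\sigma_2$ up to $s_i$ as in $\sigma_1$; in particular $p$ outputs $K$ in $\sigma_2$. But $\sigma_2$ is itself a computation allowed by $\mathcal{A}$, so $\mathcal{S}$ must be a correct solution on it, and the preceding proposition then demands that $K$ be globally observable in $\sigma_2$ --- contradicting the defining property of $\sigma_2$. This contradiction establishes the lemma.

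The step I expect to require the most care is the bookkeeping around the index $s_i$: I must ensure the knot $K$ fed into the opacity condition is genuinely one that $p$ has observed \emph{by} $s_i$, rather than one that only becomes globally observable arbitrarily far in the future of $\sigma_1$, while simultaneously guaranteeing that $p$ has already committed to outputting $K$ by $s_i$. Reconciling these two timing requirements at a single state is what makes the argument go through, and it relies precisely on the monotonicity of the observation graph and the irrevocability of the output decision. A secondary point worth stating explicitly is that $\sigma_2 \in \mathcal{A}$, since only then is $\mathcal{S}$ obligated to satisfy the KI properties on $\sigma_2$ and the proposition may be applied to it.
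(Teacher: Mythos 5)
Your proof is correct and follows essentially the same indistinguishability argument as the paper: fix a state $s_i$ by which $p$ has observed and output a knot $K$, invoke knot opacity to obtain $\sigma_2 \in \mathcal{A}$ that is observation graph identical for $p$ up to $s_i$ but in which $K$ is not globally observable, and use determinism to force $p$ to output $K$ in $\sigma_2$ as well. The only cosmetic difference is in closing the contradiction: you cite the preceding proposition (outputs must be globally observable) as a black box, whereas the paper unfolds that same reasoning explicitly --- some process $p_2$ not observing $K$ in $\sigma_2$ must either output a different knot, violating KI Agreement, or output nothing, violating KI Termination.
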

\begin{proof}
Assume the opposite. Suppose there exists a knot opaque adversary $\mathcal{A}$. Also, let $\mathcal{S}$ be the algorithm that solves the Knot Identification Problem in $\mathcal{A}$. Since $\mathcal{A}$ is knot opaque, there exists a computation $\sigma_1$ and a process $p_1$ such that 
for every knot that $p_1$ observes, it is unclear to $p_1$ whether or not this knot is globally observable. 

Algorithm $\mathcal{S}$ is assumed to be a solution to the Knot Identification Problem. 
According to the KI Termination property, $p_1$ in $\sigma_1$ must output one of its observed knots. Let $p_1$ output knot $K$ in some state $s_i$ of $\sigma_1$. Since $\mathcal{A}$ is knot opaque, it contains a computation $\sigma_2$ that is observation graph identical
to $\sigma_1$ 
for $p_1$ up to state $s_i$, yet knot $K$ is not globally observable in $\sigma_2$. 
If $\sigma_2$ is observation graph identical to $\sigma_1$
for $p_1$ up to state $s_i$, then process $p_1$ in algorithm $\mathcal{S}$ outputs $K$ in $\sigma_2$ just like it does in $\sigma_1$. 

If knot $K$ is not globally observable in $\sigma_2$, then there is a process $p_2$ that does not observe $K$
in $\sigma_2$. If so, $p_2$ in $\sigma_2$ either outputs a knot different from $K$ or none at all. In the first case, $\mathcal{S}$ violates the KI Agreement property that requires that every process outputs the same knot. In the second case, if $p_2$ does not output a knot in $\sigma_2$, $\mathcal{S}$ violates KI Termination Property requiring every process to output a knot. 

In either case $\mathcal{S}$ does not comply with the properties of the Knot Identification Problem. This means that, contrary to our initial assumption, $\mathcal{S}$ may not be a solution to this problem. Hence the lemma.
\end{proof}

\ \\
\textbf{Knot finality.}
Lemma~\ref{opaque} restricts the adversary from hiding whether a particular knot a process observes is globally observable or not.
However, even if each process knows if the knot is globally observable, it may still be insufficient to ensure the existence of a solution.

Consider an arbitrary computation $\sigma_1$ and an arbitrary process $p$ of some adversary $\mathcal{A}$. An adversary $\mathcal{A}$ is  \emph{knot observation final} if it contains a computation $\sigma_1$ where there is a process $p$ such that, for every state $s_i$
of $\sigma_1$,
there is a computation $\sigma_2$ which is observation graph identical to $\sigma_1$
for $p$ up to state $s_i$
such that, after state $s_i$, it does not contain any more knot observations by $p$.  Intuitively, in such an adversary, a process may not gain additional knot information by delaying its decision.

A knot is \emph{primary} for some process $p$ in computation $\sigma$ if it is the first observed knot by $p$ in $\sigma$.

\begin{figure}[htbp]
   \centering 
   \includegraphics[angle=-90,width=0.8\textwidth]{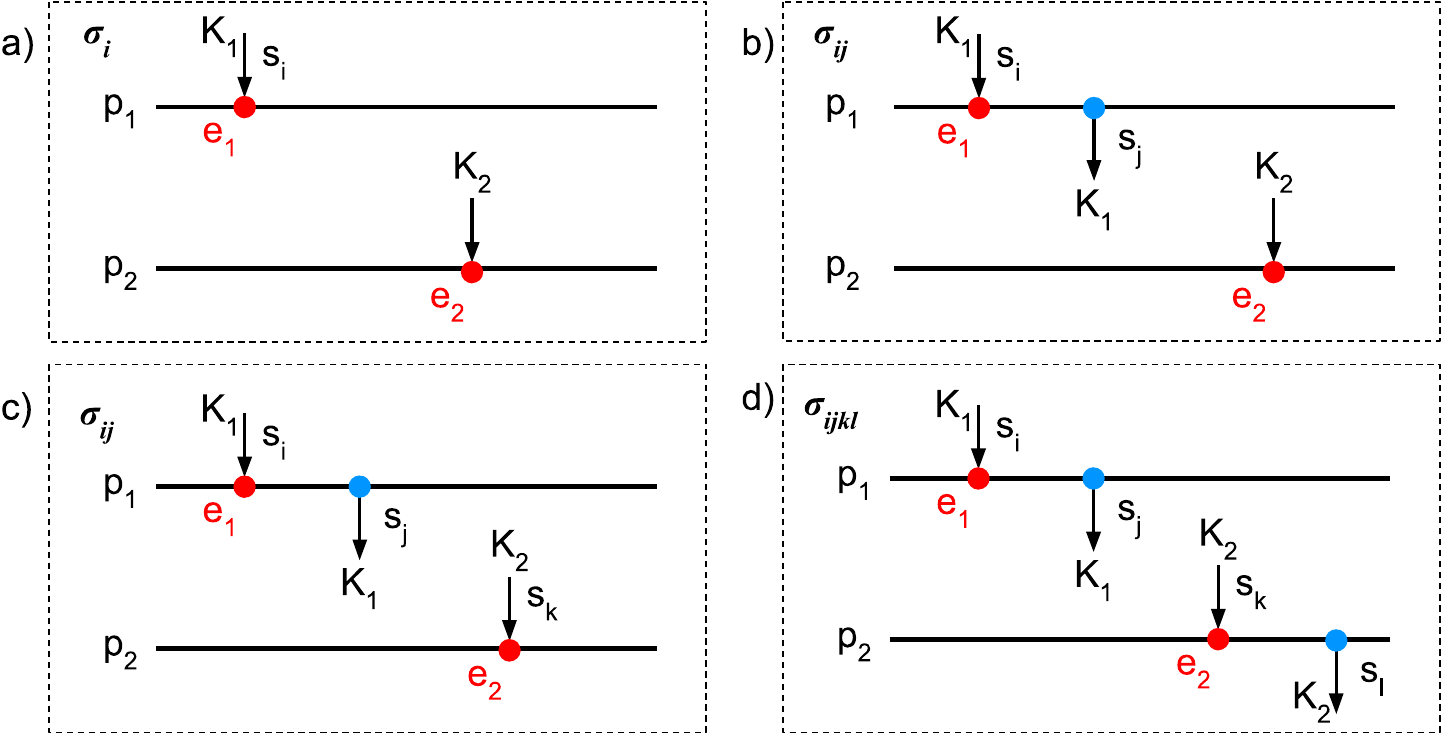}
   \caption{Illustration for the proof of Lemma~\ref{lemNoAsynch}. In figure a), in computation $\sigma_i$, process $p_1$ observes knot $K_1$ with event $e_1$ in state $s_i$. In figure b), in computation $\sigma_{ij}$, process $p_1$ outputs knot $K_1$ in state $s_{ij}$. In figure c), in the same computation $\sigma_{ij}$, process $p_2$ observes knot $K_2$ in state $s_k$ with event $e_2$. In figure d), in computation $\sigma_{ijkl}$, process $p_2$ outputs $K_2$ in state $s_l$.}
   \label{lemma2pic}
\end{figure}

\begin{lemma}\label{lemNoAsynch}
Consider an observation final, asynchronous, knot transparent adversary $\mathcal{A}$. 
If $\mathcal{A}$ contains a computation $\sigma$ such that a pair of processes observe two
different primary knots, then this adversary does not have a knot identification solution even
though this adversary is knot transparent. 
\end{lemma}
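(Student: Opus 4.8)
The plan is to assume, for contradiction, that some algorithm $\mathcal{S}$ solves the Knot Identification Problem on $\mathcal{A}$, and then to use the hypothesized computation $\sigma$ --- in which $p_1$ has primary knot $K_1$ and $p_2$ has primary knot $K_2$ with $K_1 \neq K_2$ --- to build a single admitted computation in which $p_1$ outputs $K_1$ and $p_2$ outputs $K_2$. Since $K_1 \neq K_2$, this violates KI Agreement, yielding the contradiction. Three facts already in hand drive the argument: determinism (a process's outputs depend only on its local observation graph), the proposition that a process outputs only a knot it observes, and KI Termination (every process must eventually output a knot).

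First I would force $p_1$. Let $s_i$ be the first state of $\sigma$ at which $p_1$ observes $K_1$, with observation event $e_1$. Applying knot observation finality to $p_1$ at $s_i$ gives a computation $\sigma_i$, observation graph identical to $\sigma$ for $p_1$ up to $s_i$, in which $p_1$ observes no further knot after $s_i$. Because $K_1$ is primary, it is then the only knot $p_1$ ever observes in $\sigma_i$; by KI Termination $p_1$ must output a knot, and by the proposition that output is a knot $p_1$ observes, hence exactly $K_1$. Let $s_{ij}$ be the state where this output occurs (see Figure~\ref{lemma2pic}). By determinism, $p_1$ will output $K_1$ in every computation whose local observation graph for $p_1$ agrees with $\sigma_i$ through $s_{ij}$.

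Next I would reinstate $p_2$'s observation of $K_2$ without disturbing $p_1$'s decision, then force $p_2$ symmetrically. Using asynchrony, I insert non-communicating states so as to postpone delivery to $p_1$ of any link that would grow $LG(p_1,\cdot)$ beyond its value in $\sigma_i$ through $s_{ij}$; since a non-communicating state carries no information and preserves causality and $\mathcal{A}$ is asynchronous, the resulting $\sigma_{ij}\in\mathcal{A}$ stays observation graph identical to $\sigma_i$ for $p_1$ through $s_{ij}$, so $p_1$ still outputs $K_1$. I arrange $\sigma_{ij}$ so that $p_2$ still reaches its observation of $K_2$, at a state $s_k$ with event $e_2$. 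Applying knot observation finality and KI Termination to $p_2$ exactly as before, and again using non-communicating states to delay any link that would carry $K_1$ into $LG(p_2,\cdot)$ before $p_2$ decides, I obtain an extension $\sigma_{ijkl}\in\mathcal{A}$ in which $p_2$ outputs its unique observed knot $K_2$ at some state $s_l$, while $p_1$'s earlier output of $K_1$ is untouched. In $\sigma_{ijkl}$ the processes output $K_1$ and $K_2\neq K_1$, contradicting KI Agreement and proving the lemma.

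I expect the crux to lie in the third step: exhibiting one computation of $\mathcal{A}$ that simultaneously looks to $p_1$ exactly like the observation-final computation $\sigma_i$ up to the moment $p_1$ commits to $K_1$, and still lets $p_2$ observe and commit to $K_2$, with neither primary knot entering the other process's local observation graph before that process decides. Knot observation finality is a statement about a single process, so the real content is that asynchrony lets the two single-process forcings be interleaved inside one admitted computation; the sub-claim to verify carefully is that the non-communicating-state insertions keep the computation within $\mathcal{A}$ and preserve the two relevant local observation graphs, so that determinism legitimately fixes both outputs.
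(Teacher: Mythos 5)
Your proposal follows the same skeleton as the paper's proof: assume a solution $\mathcal{S}$, use knot observation finality plus KI Termination to force $p_1$ to commit to its primary knot $K_1$, use asynchrony (insertion of non-communicating states) to embed that commitment into a computation of $\mathcal{A}$ in which $p_2$ still observes $K_2$, repeat the forcing for $p_2$, and contradict KI Agreement. However, your forcing step has a genuine gap that the paper's proof is structured to avoid. You fix $s_{ij}$ as the state where $p_1$ outputs $K_1$ in the finality computation $\sigma_i$, then pad $\sigma$ with non-communicating states and claim the padded computation is observation graph identical to $\sigma_i$ for $p_1$ through $s_{ij}$, so that determinism transfers the output. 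That identity need not hold: knot observation finality only guarantees that $p_1$ observes no further \emph{knots} after $s_i$ in $\sigma_i$; it does not prevent $LG(p_1,\sigma_i,\cdot)$ from continuing to grow with links that complete no knot, and a deterministic algorithm may legitimately time its output by such post-$s_i$ links (Termination constrains only computations actually in $\mathcal{A}$, so this is not excluded). In your padded computation, $p_1$'s local observation graph is frozen at $LG(p_1,\sigma,i)$ throughout the padding, so the two views can differ strictly before $s_{ij}$; determinism then gives you nothing, $p_1$ need not output at $s_{ij}$, and the construction of your $\sigma_{ij}$ collapses. The same flaw recurs verbatim in your forcing of $p_2$.

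The paper avoids pre-committing to an output state: it inserts non-communicating states into $\sigma$ one at a time, re-applies knot observation finality to each padded computation at the end of its padding (obtaining a primed computation that matches the padded one through state $s_{i+k}$ and in which $p_1$ must output $K_1$ at $s_{i+k}$ or later), and then invokes Termination to conclude that at some padding depth $j$ the output of $K_1$ lands \emph{inside} the padded region --- which is exactly the point at which determinism validly transfers the output to the padded computation $\sigma_{ij}$. The indistinguishability claim made at each step of that iteration is true by construction, whereas yours is not. A smaller omission: you never fix the relative order of $e_1$ and $e_2$; the paper's explicit ``without loss of generality, $e_1$ causally precedes $e_2$'' is what guarantees that after the padding $p_2$'s observation event still occurs at a state $s_k$ following $s_j$, so that $p_2$'s forcing does not disturb $p_1$'s already-made output.
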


\begin{proof}
Consider the adversary $\mathcal{A}$ that conforms to the conditions of the lemma. Yet, there is an algorithm $\mathcal{S}$ that solves the Knot Identification Problem on $\mathcal{A}$. 
According to the lemma conditions, $\mathcal{A}$ allows some computation $\sigma$ with a pair of processes $p_1$ and $p_2$ that observe different primary knots $K_1$ and $K_2$, respectively. Since $\mathcal{A}$ is knot transparent, $K_1$ and $K_2$ may be globally observable. 
Refer to Figure~\ref{lemma2pic} for illustration.

Let $e_1$ and $e_2$ be the corresponding knot observation events in $\sigma$.
The two events may be either 
concurrent or causally dependent. In the latter case, we assume, without loss of generality, that $e_1$ causally precedes $e_2$. Let event $e_1$ occur in state $s_i$. 

Since $\mathcal{A}$ is knot observation final, it allows a computation $\sigma_{i}^{\prime}$ that is observation graph identical for $p_1$ to $\sigma$ up to state $s_i$, yet $p_1$ does not observe any knots after state $s_i$ in $\sigma_{i}^{\prime}$. That is, the only knot $p_1$ observes is $K_1$. (We denote computations where a process does not observe any more knots with the prime symbol.)
%

Since $\mathcal{S}$ is assumed to be a solution to the Knot Identification Problem, each process, including $p_1$, must output a knot in $\sigma_{i}^{\prime}$. The only knot that $p_1$ observes in $\sigma_{i}^{\prime}$ is $K_1$. Hence, $p_1$ outputs $K_1$. It may output it in state $s_i$, or in some later state. We consider the case where $p_1$ outputs $K_1$ later. 

Since $K_1$ is primary for $p_1$, the observation event 
$e_1$ for $K_1$ at $p_1$ in $\sigma$ causally precedes observation events of other knots at $p_1$ if such observations ever happen. We construct a computation $\sigma_{i1}$
from $\sigma$
by adding a non-communication state after state $s_i$. Since $\mathcal{A}$ is an asynchronous adversary, $\mathcal{A}$ allows $\sigma_{i1}$. Note that $\mathcal{A}$ also allows a computation $\sigma_{i1}^{\prime}$  which is observation identical to $\sigma_{i1}$
for $p_1$ for states up  to $s_{i+1}$ but where
$p_1$ observes no other knots besides $K_1$. Similarly, $\mathcal{S}$ must have $p_1$ output $K_1$ in $\sigma_{i1}^{\prime}$. This output occurs in state $s_{i+1}$ or later. 

Note that the purported solution to the Knot Identification Problem $\mathcal{S}$ has to comply with its Termination Property. This means that each process must eventually output a knot. Therefore, as we continue this process of adding non-communication states past $s_i$, we find computation $\sigma_{ij} \in \mathcal{A}$ where $p_1$ outputs $K_1$ in state $s_j$ following state $s_i$.

Let us examine $\sigma_{ij}$. In this computation, $p_2$ observes its primary knot $K_2$ with observation event $e_2$. By construction, $e_2$ happens in some state $s_k$ following $s_j$. Similar to the above procedure, we continue adding non-communication states past $s_k$ until we obtain computation $\sigma_{ijkl}$ where $p_2$ outputs knot $K_2$ in state $s_l$ following state $s_k$. 

Let us now examine $\sigma_{ijkl}$. In this computation, in algorithm $\mathcal{S}$, process $p_1$ outputs knot $K_1$ while process $p_2$ outputs knot $K_2$. However, these two knots are different. Therefore, $\mathcal{S}$ violates the Agreement Property of the Knot Identification Problem requiring every process to output the same knot. Yet, this means that $\mathcal{S}$ may not be a solution to this problem and our initial assumption is incorrect. This proves the lemma.
\end{proof}

An adversary $\mathcal{A}$ is \emph{primary uniform} if the following conditions hold  for every computation $\sigma \in \mathcal{A}$ : (i) each process observes at least one knot; (ii) if some process $p_1$ observes its primary knot $K_1$  and another process observes its primary knot $K_2$, then $K_1 = K_2$. To put another way, in a single computation of a primary uniform adversary, all processes observe the same primary knot. 

\begin{theorem}\label{thrmIFF}
For a knot observation final asynchronous knot transparent adversary $\mathcal{A}$ to allow a solution to the Knot Identification Problem, it is necessary and sufficient for $\mathcal{A}$ to be primary uniform.  
\end{theorem}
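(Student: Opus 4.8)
The plan is to prove the equivalence in the two standard directions, exploiting the fact that almost all of the necessity is already packaged in the preceding results, so that the only genuinely new content is a concrete algorithm for the sufficiency direction. I would state up front that necessity reduces to Lemma~\ref{lemNoAsynch} together with a short argument about processes that see no knot, while sufficiency is witnessed by the algorithm \emph{KIA} promised in the introduction.

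For necessity I would argue the contrapositive: if $\mathcal{A}$ is not primary uniform, it admits no solution. A failure of primary uniformity means that some computation $\sigma \in \mathcal{A}$ violates condition (i) or condition (ii) of the definition. If condition (ii) fails, then $\sigma$ contains two processes observing distinct primary knots, and the conclusion is immediate from Lemma~\ref{lemNoAsynch}, whose hypotheses (observation final, asynchronous, knot transparent) are exactly those assumed here. If condition (i) fails, some process $p$ observes no knot in $\sigma$; by KI Termination $p$ must eventually output a knot, but by the proposition that every output is a globally observable knot a process may output only a knot it observes, so $p$ cannot comply, and again no solution exists. Combining the cases yields necessity.

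For sufficiency I would exhibit \emph{KIA} and verify it solves the problem on every computation of a primary uniform $\mathcal{A}$. The rule is the natural one: each process watches its local observation graph and, upon observing its first (primary) knot, outputs that knot; this decision depends only on the local observation graph, as required of a deterministic algorithm. Condition (i) of primary uniformity guarantees that every process eventually observes a knot, hence eventually outputs, giving KI Termination. Condition (ii) guarantees that all processes share the same primary knot, so all outputs coincide, giving KI Agreement. I would also note that this common primary knot is globally observable, being the primary knot of every process, so the output respects the constraint that only globally observable knots are output.

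The step I expect to require the most care is making the sufficiency argument airtight around the notion of \emph{primary} knot when a single observation event completes more than one previously unobserved knot in the same state, since then ``the first observed knot'' is a priori ambiguous and a careless choice could break KI Agreement. I would resolve this by fixing a deterministic tie-break (for instance, selecting the knot whose vertex set is smallest under a canonical order on identifiers) applied uniformly at every process, and then showing that condition (ii) forces all processes to the same selection regardless of the order in which simultaneously completed knots are examined. A secondary point worth recording is that primary uniformity in fact already implies knot transparency, so the transparency hypothesis is not used in the sufficiency direction; I would mention this only as a remark rather than belabor it.
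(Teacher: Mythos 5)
Your proposal is correct and follows essentially the same route as the paper: necessity is delegated to Lemma~\ref{lemNoAsynch}, and sufficiency is witnessed by \emph{KIA}, where each process outputs its first observed knot, with conditions (i) and (ii) of primary uniformity yielding KI Termination and KI Agreement respectively. Your write-up is in fact slightly more thorough than the paper's: the paper does not explicitly handle the failure of condition (i) (a process that never observes a knot) in the necessity direction, and it ignores the simultaneous-completion tie you raise---though note that under primary uniformity two distinct knots completed by the same first observation event would each be primary and would then be forced equal via any other process's primary knot, so your explicit tie-break, while harmless, is dispensable.
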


\begin{proof}
The necessity part of the theorem follows from Lemma~\ref{lemNoAsynch}. We prove the sufficiency by presenting the algorithm \emph{KIA} below that solves the Knot Identification Problem under $\mathcal{A}$. 
\end{proof}

\section{Knot Identification Algorithm \emph{KIA}}

\noindent
\textbf{Description.} The knot identification algorithm \emph{KIA} operates as follows. See Algorithm~\ref{algKIA}. Across every available outgoing 
link, 
each process $p$ relays all the connectivity data that it has observed so far. That is, if process $p$ communicates with process $q$ at state $s_{i}$ of computation $\sigma$, then $p$ transmits its entire local observation graph, $LG(p,\sigma,i)$, to $q$.

Once a process $p$ detects a knot in $LG(p, \sigma, i)$, it outputs it.
Since the adversary is primary uniform, each process is guaranteed to eventually observe a primary knot and this knot is the same for every process.  That is, \emph{KIA} solves the Knot Identification Problem.

\begin{algorithm}[htbp]
\SetKwComment{Comment}{$\triangleright$\ }{}
\textbf{Constants:} \\
    $p$ \Comment*[f]{process identifier} \\
\vspace{2mm}
\textbf{Variables:} \\
 $LG(p,\sigma,i)$ \Comment*[f]{local observation graph of process $p$} \\
 
\vspace{2mm}
\textbf{Actions:} \\

\If{exist outgoing links}{
   \textbf{send} $LG(p,\sigma,i)$ \emph{to every outgoing link}
}

\If{\textbf{\em receive} $LG(q,\sigma,i)$ from process $q$}{
    $LG(p,\sigma,i) = LG(p,\sigma,i)\cup LG(q,\sigma,i)$ \Comment*[f]{merge graphs}
}

\If{$\exists$ knot\ $K$:\ $K \subset LG(p,\sigma,i)$ }{
    \textbf{output} $K$ \Comment*[f]{report knot}
}

\caption{Knot Identification Algorithm \emph{KIA}}
\label{algKIA}
\end{algorithm}

\ \\
\textbf{Complexity estimation.} Let us estimate the number of states it takes for each process of \emph{KIA} to output its decision. This estimate is tricky since the algorithm may not do anything productive if no edges appear.  Hence, we only count states where information spreads. To put another way, we compute the worst case number of causally related links before every process outputs a knot. 

Let $n$ be the number of processes in the network. The algorithm operation can be divided into two parts: (i) knot formation and (ii) knot data propagation. In the worst case, these two parts run consecutively. Suppose the last process, $p$, that participates in the knot is the first to observe it. Then, $p$ is the only process that informs the other processes of the knot. To put another way, the knot observations at the other processes are causally preceded by the knot observation at $p$.

The knot with the longest causally related links is a cycle of $n$ edges. 
The knot data propagation part requires $n-1$ edges if all processes are informed sequentially. Hence, the worst case \emph{KIA} complexity is $2n-1$, which is in $O(n)$.

\section{\emph{KIA} Performance Evaluation}
\begin{figure}[htb]
   \centering 
   \includegraphics[width=0.6\textwidth]{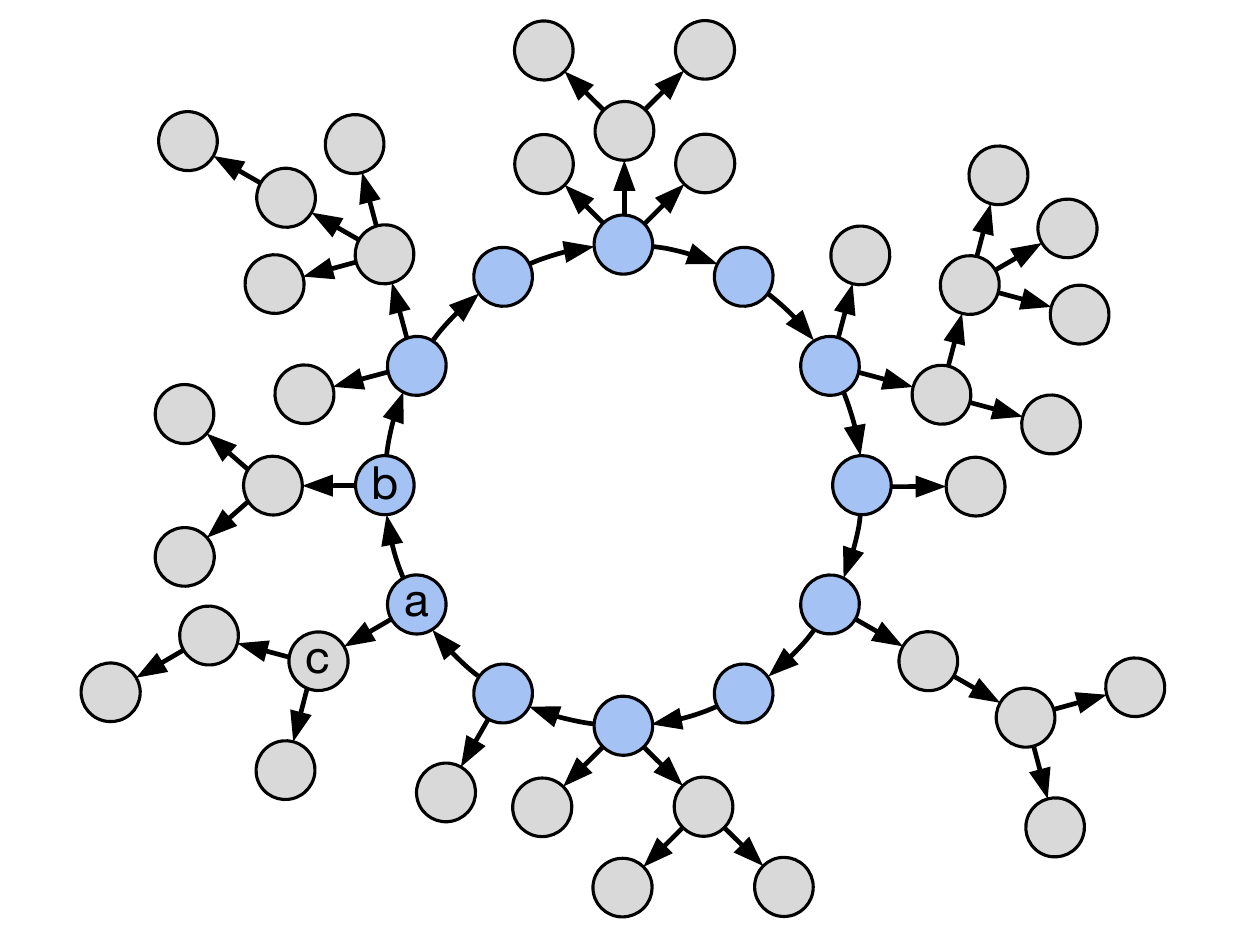}
   \caption{Intermittent Connectivity Topology Example. The underlying topology contains a single knot: the cycle.}
   \label{figCycleOfTrees}
\end{figure}

\begin{figure}[htbp]
\centering
    \includegraphics[width=8cm]{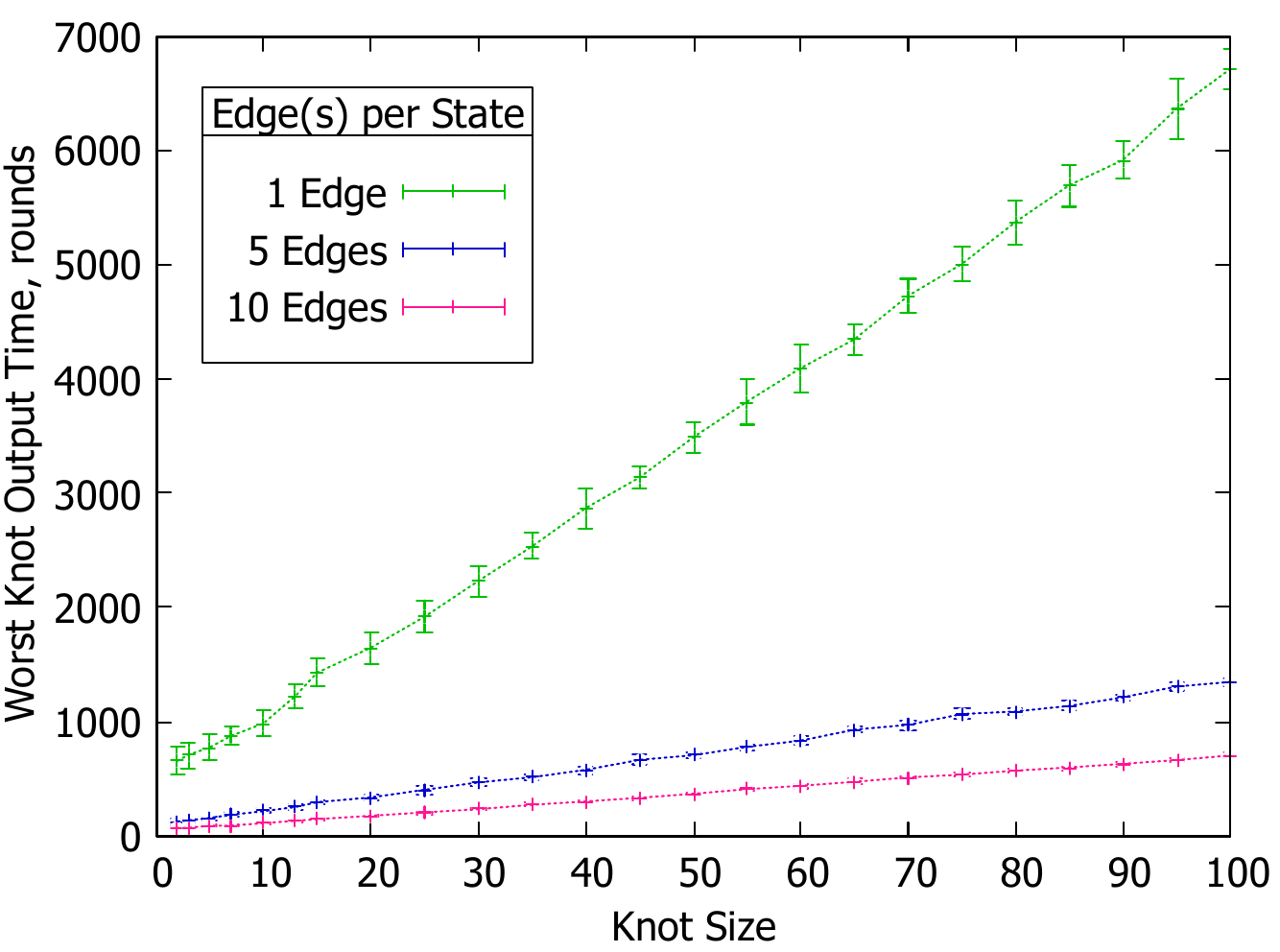}
   \caption{Longest knot output time as a function of the knot size.}
   \label{varyingKnotSizeGraph}
\end{figure}

\begin{figure}[htb]
\centering
\includegraphics[width=8cm]{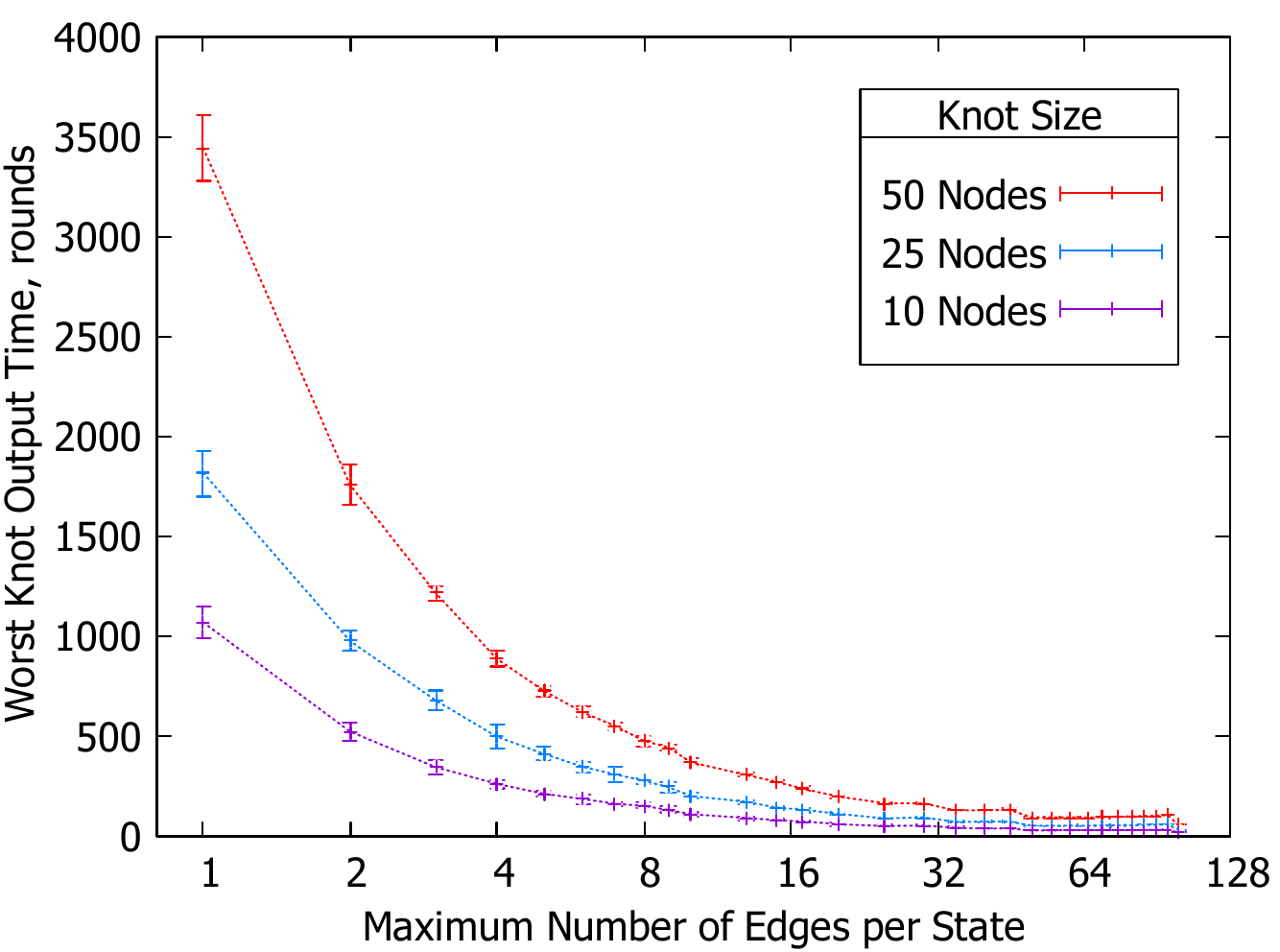}
   \caption{Longest knot output time as a function of the maximum number of edges per state.
   }
   \label{varyingEdgesPerStateGraph}
\end{figure}

We studied the performance of our Knot Identification Algorithm \emph{KIA} using an abstract algorithm simulator QUANTAS~\cite{oglio2022quantitative}. The QUANTAS code for \emph{KIA} as well as our performance evaluation data is available online~\cite{KiaGitHub,KiaData}.
The computations were selected as follows. First, we generated the underlying \emph{backbone} topology. In the backbone, a certain number of nodes are jointed in a cycle. Each remaining node is randomly attached with a single edge to an already selected connected node.
See Figure~\ref{figCycleOfTrees} for an example of such a topology. In each round of a computation, a fixed number of backbone edges appear. The edges to appear are selected uniformly at random. Thus, each computation contains a single knot---the backbone cycle---while the whole network is unlikely to be connected in a single round. Moreover, the information about this cycle is eventually propagated to all nodes in the network. That is, all generated computations contain exactly one globally observable knot. 

We implemented \emph{KIA} and measured its performance. We measured the speed of knot detection expressed as the longest number of rounds it takes for any process in the network to output the knot. 

In the first experiment, we fixed the number of random edges appearing per round and varied
the knot (cycle) size. The results are shown in Figure~\ref{varyingKnotSizeGraph}. We set the network size to $100$ nodes. The cycle size varies from $2$ to $100$. That is, the largest cycle comprises the whole network. The computation length is set at $6,000$ rounds. The knot output time is averaged across $10$ computations. We plot \emph{KIA} performance for the case of $1$, $5$ and $10$ backbone edges appearing per round. The data shows that smaller knots are detected quicker by all the nodes in the network. 

In the second experiment, we fixed the knot, i.e. cycle, size and varied the number of random edges per round.  The results are shown in Figure~\ref{varyingEdgesPerStateGraph}. Intuitively, it shows that a greater number of edges appearing in one round, even if the network remains disconnected, provides greater overall connectivity and accelerates knot detection. 

\ \\
Our experiments demonstrate the practicality of our knot identification approach to agreement in dynamic networks.

\section{Extensions of Knot-Based Consensus}

\textbf{Distinguished knots.} In this paper, we treated the problem of knot-based consensus as generally as possible. However, it may be adapted to particular systems: certain topologies may be significant to the system and the processes could be programmed to distinguish such knots. For example, the processes would reject all cycles with fewer than $10$ nodes  or accept only knots which are completely connected subgraphs. 

\ \\
\textbf{Expiring links.} In the communication model, it is assumed that the sender process transfers its entire communication history to the receiver process across the communication link. This may require extensive communication and resources. 

Our algorithm may be adapted to limit resource usage. For example, the algorithm may discard the links older than some pre-determined period, say $P$. To put another way, the links and topological information expire after $P$ states. This model would nicely represent the network with moving topology or changing membership. 
In this case, the necessary conditions of Theorem~\ref{thrmIFF} must apply for the links within this period $P$.

 \ \\
 \textbf{Future research.} In this paper, we apply knot identification to the problem of agreement in dynamic networks. In the future, it would be interesting to study what other topological features can be effectively used for consensus and related tasks.
 Alternatively, it would be interesting to determine communication environments that naturally yield the dynamic graphs that comply with the adversary conditions allowing the solution the Knot Identification Problem. 

 Another promising research direction is implementing our knot identification algorithm in a complete system and testing its performance in practical environments such as Internet-of-Things networks. 

The computation model we consider can address message loss and process failure as special topologies. However, these faults are benign. It is interesting to address solvability of Knot 
 Identification and similar problems in the presence of Byzantine faults where faulty processes may behave arbitrarily~\cite{lamport1982byzantine,oglio2023consensus,pease1980reaching}.

\bibliographystyle{plain}
\bibliography{knot}

\end{document}